\documentclass[preprint,12pt]{elsarticle}

\usepackage{amsmath}
\usepackage{amssymb}
\usepackage{amsthm}
\usepackage{algorithm}
\usepackage{algpseudocode}
\usepackage{hyperref}

\journal{Theoretical Computer Science}
\biboptions{sort&compress}

\newtheorem{definition}{Definition}
\newtheorem{lemma}{Lemma}
\newtheorem{theorem}{Theorem}

\newcommand{\OPT}{\operatorname{OPT}}
\newcommand{\MSPEC}{\textsc{MSPEC}}

\begin{document}

\begin{frontmatter}

\title{An Exact Polynomial-Time Algorithm for the
Minimum Shared-Power Edge Cut Problem}

\author[inst1]{Quanlong Niu}

\author[inst2]{Wei Liang}

\author[inst2]{Riheng Jia\corref{cor1}}
\ead{rihengjia@zjnu.edu.cn}

\author[inst3]{Zhao Zhang}

\cortext[cor1]{Corresponding author}

\affiliation[inst1]{
    organization={School of Big Data and Artificial Intelligence, Fuyang University of Technology},
    city={Fuyang},
    postcode={236000},
    state={Anhui},
    country={China}
}

\affiliation[inst2]{
    organization={School of Computer Science and Technology, Zhejiang Normal University},
    city={Jinhua},
    postcode={321004},
    state={Zhejiang},
    country={China}
}

\affiliation[inst3]{
    organization={School of Mathematical Sciences, Zhejiang Normal University},
    city={Jinhua},
    postcode={321004},
    state={Zhejiang},
    country={China}
}

\begin{abstract}
The Minimum Shared-Power Edge Cut (MSPEC) problem seeks to
disconnect two designated terminals with minimum total power by
assigning powers to graph vertices and deleting every edge for which
the sum of the powers at its endpoints reaches its edge weight.
Previous work gave a fully polynomial-time approximation scheme for
this problem, while its exact computational complexity remained open.
We prove that general MSPEC with nonnegative edge weights can be
solved exactly in polynomial time. Specifically, we first use linear
programming duality for each fixed cut to reformulate MSPEC as
minimizing the maximum terminal-relaxed matching weight over all
$s$-$t$ cuts. Then, we construct a terminal-copy extended graph that
transforms the maximum-weight matching model with unconstrained
terminal capacities for a fixed cut into an equivalent ordinary
maximum-weight matching model. Next, we extend the submodularity of
the unweighted maximum-matching cut function to nonnegative edge
weights and prove that the restriction obtained by fixing the terminal
copies on prescribed sides remains submodular. Consequently, MSPEC
reduces to exact submodular function minimization. Once an optimal cut
is obtained, a corresponding optimal power assignment can be
recovered. This result resolves the open question concerning the exact
computational complexity of general MSPEC.
\end{abstract}

\begin{keyword}
minimum shared-power edge cut
\sep maximum-weight matching
\sep submodular function minimization
\sep maximum-cost flow
\sep exact algorithm
\end{keyword}

\end{frontmatter}

\section{Introduction}
\label{sec:introduction}

Cut problems are fundamental models for analyzing network
connectivity. In a conventional minimum cut, costs are assigned to
edges, and hence the deletion costs of distinct cut edges are
independent. In wireless sensor networks and other settings in which
edge states are jointly determined by their endpoints, however,
changing the sensing capability of one sensor may simultaneously alter
its connections to several other sensors. Barrier-coverage studies
commonly measure barrier strength by the number of failed
sensors~\cite{kumar2007barrier}. Measuring barrier strength by
adjusting sensing ranges instead requires a model that captures the
coupling between resource allocation at vertices and the deletion of
cut edges~\cite{cabello2020mspec}.

Motivated by this setting, Cabello et
al.~\cite{cabello2020mspec} introduced the Minimum Shared-Power Edge
Cut problem. A nonnegative power is assigned to each nonterminal
vertex, and an edge is deleted when the sum of the powers at its
endpoints reaches its weight. The objective is to disconnect two
designated terminals while minimizing the total assigned power. This
model generalizes the Minimum Shrinkage problem for rectangular
barrier coverage and differs from the conventional minimum cut, whose
objective is the sum of the weights of the cut edges. Previous work
developed a fully polynomial-time approximation scheme for MSPEC and
polynomial-time algorithms for special cases such as uniform edge
weights and restricted power domains. Whether the general
edge-weighted problem can be solved exactly in polynomial time was
left as a central open question. Cabello and Colin de Verdi\`ere
subsequently proved weak NP-hardness for the general geometric Minimum
Barrier Shrinkage problem~\cite{cabello2020hardness}. Cabello and
Mulzer also studied minimum $s$-$t$ vertex cuts in disk intersection
graphs, where the objective is to remove as few disk vertices as
possible in order to disconnect $s$ and
$t$~\cite{cabello2021cuts}.

For a fixed $s$-$t$ cut, the minimum required power is described by a
vertex-cover-type linear program on the bipartite cut graph, whose
dual is a maximum-weight matching problem with unconstrained terminal
capacities~\cite{cabello2020mspec}. Thus, the key difficulty is to
find, among all $s$-$t$ cuts, one whose associated terminal-relaxed
maximum-weight matching has minimum value. Separately, S{\ae}ther and
Telle proved that the unweighted maximum-matching cut function is
submodular and used this property in the study of graph width
parameters~\cite{saether2016between}.

This paper combines the weighted matching formulation of a fixed cut
with the submodular structure of the matching-cut function, thereby
avoiding a discretized enumeration of cuts or vertex powers. First,
using the primal--dual relationship for a fixed cut, we formulate
MSPEC exactly as the minimization of a terminal-relaxed maximum-weight
matching-cut function. Second, we introduce a dedicated copy for every
terminal-incident edge, transforming the reusable-terminal matching
constraint into an ordinary matching constraint in an extended graph.
We then prove that the resulting maximum-weight matching-cut function
is submodular for nonnegative edge weights. Furthermore, we show that
the restriction obtained by fixing all copies of terminal $s$ on the
source side and all copies of terminal $t$ on the sink side remains
submodular. Finally, we use a fixed-flow maximum-cost flow algorithm
to evaluate the matching value of each fixed
cut~\cite{ahuja1993network}, employ an exact submodular function
minimization algorithm to obtain an optimal
cut~\cite{jiang2024convex}, and recover a minimum power assignment by
solving the fixed-cut linear
program~\cite{schrijver2003combinatorial}. The resulting method is an
exact polynomial-time algorithm for general MSPEC and also yields an
exact algorithm for Minimum Shrinkage in rectangular barrier coverage.

\section{Preliminaries and Problem Formulation}
\label{sec:preliminaries}

\subsection{Problem Definition}

An instance of general MSPEC consists of an undirected graph
\[
G=(U\cup\{s,t\},E),
\]
where $s$ and $t$ are the terminals, $U$ is the set of nonterminal
vertices, and every edge $e=uv\in E$ has a nonnegative weight $w_e$.
A nonnegative power $p_v$ is assigned to every $v\in U$, while
\[
p_s=p_t=0.
\]
An edge $uv$ is deleted whenever
\[
p_u+p_v\geq w_{uv}.
\]
We assume without loss of generality that $st\notin E$. Indeed, if
$st\in E$ and $w_{st}>0$, then the instance is infeasible because
$p_s=p_t=0$ and the edge $st$ cannot be deleted. If $w_{st}=0$, then
$st$ already satisfies the deletion condition at zero power and can
be removed in preprocessing.

The objective is to disconnect $s$ and $t$ after the edge deletions
while minimizing
\[
\sum_{v\in U}p_v.
\]
The optimal objective value is denoted by $\OPT_{\MSPEC}$.

\subsection{Power Allocation on a Fixed Cut}

Let
\[
V=U\cup\{s,t\}.
\]
For any $S\subseteq V$, let
\[
\overline{S}=V\setminus S
\]
and define the corresponding cut-edge set by
\[
\delta_G(S)=\{uv\in E:u\in S,\ v\in\overline{S}\}.
\]
Every $s$-$t$ cut can be represented by
\[
S(X)=\{s\}\cup X,
\qquad X\subseteq U.
\]
We write
\[
C(X)=\delta_G(S(X))
\]
for its cut-edge set.

Once $S(X)$ is fixed, all edges in $C(X)$ must be deleted. Hence every
$uv\in C(X)$ must satisfy
\[
p_u+p_v\geq w_{uv}.
\]
Only the powers of vertices in $U$ are decision variables; the
terminal powers remain fixed at zero. The minimum-power assignment
problem for the fixed cut $S(X)$ is therefore
\begin{equation}
\begin{aligned}
(P_X)\qquad
\min\quad & \sum_{v\in U}p_v \\
\text{s.t.}\quad
& p_u+p_v\geq w_{uv}, && \forall uv\in C(X),\\
& p_v\geq0, && \forall v\in U,\\
& p_s=p_t=0.
\end{aligned}
\label{eq:fixed-primal}
\end{equation}
Its optimal objective value is denoted by $\OPT(P_X)$.

\subsection{Dual of the Fixed-Cut Problem}

Introduce one dual variable $y_e\geq0$ for each cut edge
$e\in C(X)$. The dual of~\eqref{eq:fixed-primal} is
\begin{equation}
\begin{aligned}
(D_X)\qquad
\max\quad & \sum_{e\in C(X)}w_e y_e \\
\text{s.t.}\quad
& \sum_{\substack{e\in C(X)\\e\ni v}}y_e\leq1,
&& \forall v\in U,\\
& y_e\geq0, && \forall e\in C(X).
\end{aligned}
\label{eq:fixed-dual}
\end{equation}
The dual constraints correspond only to nonterminal vertices. This is
because $p_s$ and $p_t$ are fixed at zero and are not decision
variables of the primal program; consequently, there are no capacity
constraints for the terminals in~$(D_X)$.

The edge set $C(X)$ induces a bipartite graph with bipartition
\[
\bigl(S(X),\overline{S(X)}\bigr).
\]
The vertex--edge incidence matrix of a bipartite graph is totally
unimodular. The constraint matrix of~$(D_X)$ is obtained from this
incidence matrix by deleting the rows corresponding to $s$ and $t$.
Since every submatrix of a totally unimodular matrix remains totally
unimodular and the right-hand side of~$(D_X)$ is integral, the
feasible polyhedron of~$(D_X)$ has integral extreme points. Hence
$(D_X)$ has an integral optimal solution. Moreover, every cut edge is
incident to at least one nonterminal vertex, and each nonterminal
vertex has capacity one. Therefore, an integral optimal solution
satisfies
\[
y_e\in\{0,1\},
\qquad \forall e\in C(X).
\]
Here $y_e=1$ means that edge $e$ is selected, whereas $y_e=0$ means
that it is not. Each nonterminal vertex is incident to at most one
selected edge, while either terminal may be incident to multiple
selected edges.

\begin{definition}[Terminal-relaxed matching]
Given $C(X)$, an edge set $M\subseteq C(X)$ is a
\emph{terminal-relaxed matching} if
\[
\deg_M(v)\leq1,
\qquad \forall v\in U,
\]
while the numbers of selected edges incident to $s$ and $t$ are
unrestricted.
\end{definition}

Define the maximum weight of a terminal-relaxed matching on the fixed
cut $S(X)$ by
\begin{equation}
\begin{aligned}
\mu(X)=\max\Bigl\{\sum_{e\in M}w_e:
&\ M\subseteq C(X),\\[-1mm]
&\ M\text{ is a terminal-relaxed matching}\Bigr\}.
\end{aligned}
\label{eq:terminal-relaxed-matching}
\end{equation}
By the integrality of~$(D_X)$,
\[
\OPT(D_X)=\mu(X).
\]
By strong linear-programming duality,
\[
\OPT(P_X)=\OPT(D_X).
\]
Consequently, for every $X\subseteq U$,
\begin{equation}
\OPT(P_X)=\mu(X).
\label{eq:fixed-cut-value}
\end{equation}

\subsection{Fixed-Cut Characterization of MSPEC}

The following fixed-cut characterization follows from Cabello et
al.~\cite{cabello2020mspec}.

\begin{lemma}
Let $P_X$ be the minimum-power assignment problem for the cut
$S(X)=\{s\}\cup X$, where $X\subseteq U$. Then
\begin{equation}
\OPT_{\MSPEC}
=
\min_{X\subseteq U}\OPT(P_X).
\label{eq:fixed-cut-characterization}
\end{equation}
\end{lemma}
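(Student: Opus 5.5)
We prove the equality by establishing the two inequalities separately. Both directions rest on one observation: a power assignment is feasible for MSPEC exactly when the edges it deletes contain every edge of some $s$-$t$ cut. Throughout we use the convention $p_s=p_t=0$, and we rely on $(P_X)$ having an optimal solution. This holds because $(P_X)$ is feasible (take $p_v=\max_e w_e$ for all $v\in U$) and bounded below by $0$. If some cut edge joined $s$ to $t$, $(P_X)$ would be infeasible, but the preprocessing assumption $st\notin E$ rules this out.

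\emph{Step 1: $\OPT_{\MSPEC}\le\min_X\OPT(P_X)$.} Fix $X\subseteq U$ and let $p$ be optimal for $(P_X)$. Every edge of $C(X)$ then satisfies $p_u+p_v\ge w_{uv}$, so every edge of $C(X)$ is deleted. In the remaining graph, every surviving edge has both endpoints in $S(X)$ or both in $\overline{S(X)}$. Hence no path leaves $S(X)$, and since $s\in S(X)$ and $t\notin S(X)$, the terminals are disconnected. Thus $p$ is feasible for MSPEC with cost $\OPT(P_X)$. Minimizing over $X$ gives this inequality.

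\emph{Step 2: $\OPT_{\MSPEC}\ge\min_X\OPT(P_X)$.} Take any feasible MSPEC assignment $p$, and let $S$ be the vertex set of the connected component of $s$ after deletion. Then $t\notin S$, so $S=S(X)$ with $X=S\setminus\{s\}\subseteq U$. Every edge $uv\in C(X)$ must have been deleted, since otherwise $v$ would lie in the component of $s$. Hence $p_u+p_v\ge w_{uv}$ on $C(X)$, so $p$ is feasible for $(P_X)$ and $\sum_v p_v\ge\OPT(P_X)\ge\min_X\OPT(P_X)$. Taking the infimum over feasible $p$ gives the inequality.

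\emph{Attainment of the minimum.} The step needing the most care is whether $\OPT_{\MSPEC}$ is actually attained rather than only an infimum. Steps 1 and 2 settle this together. There are finitely many $X$, and each $(P_X)$ attains its optimum, so the minimum on the right-hand side is attained. By Step 1 the corresponding optimal solution is feasible for MSPEC with exactly that value, so it achieves the infimum. Apart from this point, the argument is a direct translation between ``$s$ and $t$ disconnected'' and ``all edges of some $s$-$t$ cut deleted,'' and the component-of-$s$ construction in Step 2 is the key idea.
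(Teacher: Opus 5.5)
Your proof is correct. The paper gives no argument of its own for this lemma; it only states that the characterization follows from Cabello et al.\ and moves on. Your write-up therefore supplies what the citation leaves implicit rather than taking a competing route.

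Your two directions are the natural ones:
\begin{itemize}
\item An optimal $p$ for $(P_X)$ deletes all of $C(X)$ and hence separates $s$ from $t$.
\item Conversely, the component of $s$ after deletion determines a shore $S(X)$ whose cut edges must all satisfy $p_u+p_v\ge w_{uv}$.
\end{itemize}
You also handle two points that a bare citation glosses over. First, each $(P_X)$ is feasible and bounded, which uses $st\notin E$ and nonnegativity. Second, $\OPT_{\MSPEC}$ is actually attained, so writing it as a minimum is legitimate.

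Your Step~1 is worth keeping explicit for another reason. It is exactly what justifies the last step of Algorithm~\ref{alg:exact-mspec}: the assignment obtained by solving $P_{X^*}$ is itself a feasible MSPEC solution, not just a number equal to the optimum. The paper's final theorem uses this implicitly.
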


Combining~\eqref{eq:fixed-cut-value}
and~\eqref{eq:fixed-cut-characterization} gives
\begin{equation}
\OPT_{\MSPEC}
=
\min_{X\subseteq U}\mu(X).
\label{eq:mspec-terminal-relaxed}
\end{equation}

\section{Terminal-Copy Extended Graph}
\label{sec:extended-graph}

Define the sets of terminal-incident edges by
\[
E_s=\{su\in E:u\in U\},
\qquad
E_t=\{ut\in E:u\in U\}.
\]
For every $e\in E_s$, introduce a dedicated copy $s_e$ of terminal
$s$; similarly, for every $e\in E_t$, introduce a dedicated copy
$t_e$ of terminal $t$. Let
\[
C_s=\{s_e:e\in E_s\},
\qquad
C_t=\{t_e:e\in E_t\}.
\]
We construct the extended graph
\[
H=(W,F),
\]
where
\[
W=U\cup C_s\cup C_t.
\]
Let
\[
E_U=\{uv\in E:u,v\in U\}.
\]
The edge set of $H$ is
\begin{equation}
F
=
E_U
\cup\{s_eu:su\in E_s\}
\cup\{ut_e:ut\in E_t\}.
\label{eq:extended-edge-set}
\end{equation}
Every edge of $H$ inherits the weight of its corresponding edge in
$G$.

For every $X\subseteq U$, define one side of the corresponding cut in
$H$ by
\[
\widehat{S}(X)=C_s\cup X,
\]
whose complement is
\[
W\setminus\widehat{S}(X)=C_t\cup(U\setminus X).
\]
For every $Z\subseteq W$, define the ordinary maximum-weight
matching-cut function in $H$ by
\begin{equation}
\begin{aligned}
\mu_H(Z)=\max\Bigl\{\sum_{f\in M}w_f^H:
&\ M\subseteq\delta_H(Z),\\[-1mm]
&\ M\text{ is an ordinary matching}\Bigr\},
\end{aligned}
\label{eq:matching-cut-function}
\end{equation}
where $w_f^H$ denotes the weight of edge $f$ in $H$.

\begin{theorem}[Extended-graph equivalence]
For every $X\subseteq U$, the maximum weight of a terminal-relaxed
matching on $C(X)$ equals the ordinary maximum-weight matching value
on the extended cut $\delta_H(\widehat{S}(X))$; that is,
\begin{equation}
\mu(X)=\mu_H\bigl(\widehat{S}(X)\bigr).
\label{eq:extended-equivalence}
\end{equation}
\end{theorem}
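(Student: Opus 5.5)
The plan is to exhibit a weight-preserving bijection between the terminal-relaxed matchings contained in $C(X)$ and the ordinary matchings contained in $\delta_H(\widehat{S}(X))$. Taking maxima on both sides then gives~\eqref{eq:extended-equivalence}.

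\textbf{Step 1: identify the two cut-edge sets.} For $uv\in E_U$, the edge lies in $C(X)$ exactly when one endpoint is in $X$ and the other in $U\setminus X$. This is also exactly the condition for $uv\in\delta_H(\widehat{S}(X))$, because $\widehat{S}(X)\cap U=X$. An edge $su\in E_s$ lies in $C(X)$ if and only if $u\notin X$, since $s\in S(X)$. Its copy $s_eu$ lies in $\delta_H(\widehat{S}(X))$ if and only if $u\notin X$, since $s_e\in C_s\subseteq\widehat{S}(X)$. The same argument applies to $ut\in E_t$ and $ut_e$, with $t$ and every $t_e$ on the complementary side. The assumption $st\notin E$ guarantees that no edge of $G$ is missed by $F$. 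Hence the map $\phi$ sending each $e\in E_U$ to itself, each $e=su$ to $s_eu$, and each $e=ut$ to $ut_e$ is a weight-preserving bijection from $C(X)$ onto $\delta_H(\widehat{S}(X))$.

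\textbf{Step 2: matchings correspond.} Let $M\subseteq C(X)$. For every $v\in U$, $\phi$ preserves incidence at $v$, so $\deg_{\phi(M)}(v)=\deg_M(v)$. Each copy $s_e$ or $t_e$ has degree one in $H$, so its degree in $\phi(M)$ is automatically at most one. Therefore $M$ is a terminal-relaxed matching if and only if $\phi(M)$ is an ordinary matching in $H$. Since $\phi$ preserves weights, we have $\sum_{e\in M}w_e=\sum_{f\in\phi(M)}w^H_f$. Maximizing over $M$ yields $\mu(X)=\mu_H(\widehat{S}(X))$.

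\textbf{Main obstacle.} There is no real obstacle. The only point needing care is the bookkeeping in Step 1: the side assignment of the copies must match that of the original terminals, and an edge with both endpoints being terminals must not exist. Both are handled by the construction and by the preprocessing assumption $st\notin E$.
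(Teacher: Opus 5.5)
Your proposal is correct and follows the same argument as the paper: a weight-preserving bijection between $C(X)$ and $\delta_H(\widehat{S}(X))$ that preserves incidences at vertices of $U$ and sends each terminal-incident edge to its own degree-one terminal copy, so that terminal-relaxed matchings correspond exactly to ordinary matchings. You additionally check the crossing condition edge type by edge type and state explicitly why the assumption $st\notin E$ matters, both of which the paper leaves implicit.
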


\begin{proof}
There is a natural correspondence between original cut edges and
extended cut edges:
\[
uv\longleftrightarrow uv,
\qquad u,v\in U,
\]
\[
su\longleftrightarrow s_eu,
\qquad e=su\in E_s,
\]
and
\[
ut\longleftrightarrow ut_e,
\qquad e=ut\in E_t.
\]
By the definitions of $S(X)=\{s\}\cup X$ and
$\widehat{S}(X)=C_s\cup X$, an original edge crosses $C(X)$ if and
only if its corresponding extended edge crosses
$\delta_H(\widehat{S}(X))$. Hence the correspondence is a bijection
between the two cut-edge sets.

The correspondence preserves all incidences at nonterminal vertices.
In addition, each original terminal-incident edge uses a distinct
dedicated terminal copy. Therefore, an edge set $M$ is a
terminal-relaxed matching on the original cut if and only if its
corresponding edge set is an ordinary matching on the extended cut.
Finally, every extended edge inherits the weight of its original
edge, so the correspondence preserves total matching weight. The
feasible matchings on the two cuts are thus in a weight-preserving
bijection, which proves~\eqref{eq:extended-equivalence}.
\end{proof}

\section{Submodularity of the Weighted Matching-Cut Function}
\label{sec:submodularity}

We use the following edge-preserving reconstruction property
established by S{\ae}ther and Telle~\cite{saether2016between}. The
symbol $\uplus$ denotes multiset union.

\begin{lemma}[Edge-preserving reconstruction]
Let $H=(W,F)$ be a graph and let $A,B\subseteq W$. Suppose that
$M_{\cap}$ is a matching of $\delta_H(A\cap B)$ and that $M_{\cup}$ is
a matching of $\delta_H(A\cup B)$. Then there exist a matching $N_A$
of $\delta_H(A)$ and a matching $N_B$ of $\delta_H(B)$ such that
\begin{equation}
N_A\uplus N_B
=
M_{\cap}\uplus M_{\cup}.
\label{eq:edge-preserving-reconstruction}
\end{equation}
\end{lemma}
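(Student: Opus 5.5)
The plan is to work component by component in the multigraph $M_{\cap}\uplus M_{\cup}$. Every vertex has degree at most one in each matching, so this multigraph has maximum degree two. Hence it decomposes into vertex-disjoint paths and cycles whose edges alternate between $M_{\cap}$ and $M_{\cup}$. An edge lying in both matchings appears as a $2$-cycle. The goal is to give each edge a label $A$ or $B$, put the $A$-labelled edges in $N_A$ and the $B$-labelled edges in $N_B$, and choose the labels so that two conditions hold:
\begin{itemize}
\item[(i)] every $A$-labelled edge lies in $\delta_H(A)$ and every $B$-labelled edge lies in $\delta_H(B)$;
\item[(ii)] at every vertex, its two incident edges, one from $M_{\cap}$ and one from $M_{\cup}$, receive different labels.
\end{itemize}
Condition (ii) makes $N_A$ and $N_B$ matchings, and $N_A\uplus N_B=M_{\cap}\uplus M_{\cup}$ holds by construction. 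On each component, (ii) says the labels alternate along the path or cycle. Cycles have even length because their edges alternate between the two matchings, so alternation itself is always possible.

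First I would partition $W$ into the four regions $I=A\cap B$, $P=A\setminus B$, $Q=B\setminus A$ and $O=W\setminus(A\cup B)$, and classify each edge.
\begin{itemize}
\item An edge of $M_{\cap}$ joins $I$ to $P\cup Q\cup O$. If its other end is in $O$, it lies in both $\delta_H(A)$ and $\delta_H(B)$, so it is free. If the other end is in $P$, it lies only in $\delta_H(B)$, so it is forced to $B$. If the other end is in $Q$, it is forced to $A$.
\item Symmetrically, an edge of $M_{\cup}$ joins $O$ to $I\cup P\cup Q$. It is free if its other end is in $I$, forced to $A$ if that end is in $P$, and forced to $B$ if that end is in $Q$.
\end{itemize}
Next I check local consistency with (ii). At a vertex of $P$, the incident $M_{\cap}$-edge is forced to $B$ and the incident $M_{\cup}$-edge is forced to $A$. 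At a vertex of $Q$, the forced labels are the reverse. At a vertex of $I$ or $O$, at least one incident edge is free. So no vertex locally violates (ii).

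The main obstacle is global consistency: along a component, forced edges that are separated by runs of free edges must receive labels compatible with strict alternation. I would reduce this to a parity check on a segment $u=x_0,x_1,\dots,x_k=v$ whose endpoints lie in $P\cup Q$ and whose interior vertices lie in $I\cup O$. Inside such a segment, every interior edge joins $I$ to $O$. Here $M_{\cap}$-edges and $M_{\cup}$-edges alternate, while the interior vertices alternate between $I$ and $O$. From this I would compute, in terms of the regions of $u$ and $v$ and the matchings containing the first and last edges, both the forced labels at the two ends and the parity of $k$. I would then verify case by case (four cases for $u,v\in P\cup Q$, each crossed with the two possible matchings of the first edge) that the labels agree with alternation. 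If a direct case check becomes messy, my first alternative is a potential-function formulation: assign to each edge the label $A$ or $B$ according to the parity of its position along the component, shifted by a fixed offset per component. Then I would show that all forced edges are consistent with one common offset, using the $I/O$ alternation as the invariant.

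Components with no forced edge are labelled alternately from an arbitrary starting edge. A $2$-cycle formed by a common edge $e$ can only consist of free edges, since such an $e$ lies in both $\delta_H(A\cap B)$ and $\delta_H(A\cup B)$ and hence joins $I$ to $O$. Its two copies then go one to $N_A$ and one to $N_B$. Collecting all components yields $N_A$ and $N_B$ satisfying \eqref{eq:edge-preserving-reconstruction}.
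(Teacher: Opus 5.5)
The paper does not prove this lemma; it imports it from S{\ae}ther and Telle. Your alternating-component labelling is the natural self-contained proof. Your edge classification, the local check at vertices of $P\cup Q$, and your treatment of common edges as $2$-cycles are all correct.

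\textbf{The gap.} The step you flag as the main obstacle, global consistency, is only promised, via a case check or an offset argument. Your framing (matching the forced labels at the two ends of a segment against the parity of $k$) suggests you expect such segments to occur. They never do, and proving this is the idea that closes the proof.

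\textbf{Closing it.} Take $u=x_0,\dots,x_k=v$ with $u\neq v$ in $P\cup Q$ and interior vertices in $I\cup O$.
\begin{itemize}
\item Every $M_{\cap}$-edge has exactly one end in $I$, and every $M_{\cup}$-edge has exactly one end in $O$. Hence $k\geq 2$.
\item For the same reason, $x_1\in I$ if and only if $x_0x_1\in M_{\cap}$, and $x_{k-1}\in I$ if and only if $x_{k-1}x_k\in M_{\cap}$. So $x_1$ and $x_{k-1}$ lie in the same region exactly when the first and last edges lie in the same matching.
\item The $I/O$ alternation of the interior puts $x_1$ and $x_{k-1}$ in the same region if and only if $k$ is even.
\item The alternation of the two matchings puts $x_0x_1$ and $x_{k-1}x_k$ in the same matching if and only if $k$ is odd.
\end{itemize}
These conditions are incompatible, so every one of your proposed cases is impossible.

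Hence each component contains at most one vertex of $P\cup Q$. Every forced edge is incident to $P\cup Q$, so all forced edges of a component meet at that single vertex. Your local check already gives them distinct labels there. Propagate alternation outward from that vertex; on a cycle, evenness makes it close up. This labels the remaining edges, all of which join two vertices of $I\cup O$ and are therefore free $I$--$O$ edges.

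With this observation in place of the deferred case analysis or offset argument, your proof is complete.
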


\begin{theorem}
For all $A,B\subseteq W$, the function $\mu_H$ satisfies
\begin{equation}
\mu_H(A)+\mu_H(B)
\geq
\mu_H(A\cap B)+\mu_H(A\cup B).
\label{eq:weighted-submodularity}
\end{equation}
\end{theorem}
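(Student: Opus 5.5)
The plan is to apply the edge-preserving reconstruction lemma directly to optimal matchings. Fix $A,B\subseteq W$. Let $M_{\cap}$ be a maximum-weight ordinary matching contained in $\delta_H(A\cap B)$, and let $M_{\cup}$ be a maximum-weight ordinary matching contained in $\delta_H(A\cup B)$. Such matchings exist because $F$ is finite and the empty matching is always feasible. By definition,
\[
\mu_H(A\cap B)=\sum_{f\in M_{\cap}}w_f^H,
\qquad
\mu_H(A\cup B)=\sum_{f\in M_{\cup}}w_f^H.
\]
By the Edge-preserving reconstruction lemma, there are matchings $N_A\subseteq\delta_H(A)$ and $N_B\subseteq\delta_H(B)$ with $N_A\uplus N_B=M_{\cap}\uplus M_{\cup}$.

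The key step is that the identity of multisets transfers to an identity of weights. The total weight of a multiset of edges is additive under multiset union, where an edge appearing twice, once in $M_{\cap}$ and once in $M_{\cup}$, is counted twice. Hence
\[
\sum_{f\in N_A}w_f^H+\sum_{f\in N_B}w_f^H
=\sum_{f\in M_{\cap}}w_f^H+\sum_{f\in M_{\cup}}w_f^H
=\mu_H(A\cap B)+\mu_H(A\cup B).
\]
This is exactly why the multiset form of the lemma is needed: a plain set union would lose shared edges and give the inequality in the wrong direction.

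Finally, $N_A$ is a feasible matching for $\mu_H(A)$, so $\mu_H(A)\ge\sum_{f\in N_A}w_f^H$. Likewise $\mu_H(B)\ge\sum_{f\in N_B}w_f^H$. Adding these two bounds gives \eqref{eq:weighted-submodularity}.

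Two points need checking. First, the argument does not use nonnegativity of the weights, beyond guaranteeing that maxima exist and that the empty matching is a valid baseline. Second, the only real obstacle is whether the Sæther--Telle reconstruction, which they stated for unweighted maximum matchings, really gives the edge-exact multiset identity as quoted, and not only a cardinality statement. If their result were only about cardinalities, I would re-prove the lemma by hand. The route would be to consider the symmetric difference structure of $M_{\cap}\uplus M_{\cup}$, a disjoint union of paths and even cycles with doubled edges counted separately, and assign each component's edges alternately to $N_A$ and $N_B$. This uses the fact that every edge of $\delta_H(A\cap B)\cup\delta_H(A\cup B)$ lies in $\delta_H(A)$ or $\delta_H(B)$, and that an edge lying in both cuts of $A\cap B$ and $A\cup B$ lies in both $\delta_H(A)$ and $\delta_H(B)$. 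Since the statement is given as an earlier lemma, I will simply invoke it.
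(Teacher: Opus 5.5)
Your proof is correct and matches the paper's argument step for step: take maximum-weight matchings $M_{\cap}$ and $M_{\cup}$, apply the edge-preserving reconstruction lemma, use additivity of weight over the multiset union, and bound $w(N_A)\le\mu_H(A)$ and $w(N_B)\le\mu_H(B)$. The paper also invokes the lemma without reproving it; if you ever wrote out your fallback, the two cut-membership facts you list are not enough by themselves, and you would also need to show that on each path or cycle of $M_{\cap}\uplus M_{\cup}$ one of the two alternating assignments is consistent with every edge that lies in only one of $\delta_H(A)$ and $\delta_H(B)$.
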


\begin{proof}
Let $M_{\cap}$ and $M_{\cup}$ be maximum-weight matchings of the cuts
$\delta_H(A\cap B)$ and $\delta_H(A\cup B)$, respectively. By the
definition of $\mu_H$,
\[
w(M_{\cap})=\mu_H(A\cap B)
\]
and
\[
w(M_{\cup})=\mu_H(A\cup B).
\]
By the edge-preserving reconstruction lemma, there exist a matching
$N_A$ of $\delta_H(A)$ and a matching $N_B$ of $\delta_H(B)$ such that
\[
N_A\uplus N_B=M_{\cap}\uplus M_{\cup}.
\]
Because this reconstruction preserves the edge multiset,
\begin{equation}
w(N_A)+w(N_B)
=
w(M_{\cap})+w(M_{\cup}).
\label{eq:weight-preservation}
\end{equation}
By the definition of a maximum-weight matching,
\[
\mu_H(A)\geq w(N_A)
\]
and
\[
\mu_H(B)\geq w(N_B).
\]
Therefore,
\begin{align*}
\mu_H(A)+\mu_H(B)
&\geq w(N_A)+w(N_B)\\
&=w(M_{\cap})+w(M_{\cup})\\
&=\mu_H(A\cap B)+\mu_H(A\cup B),
\end{align*}
which proves~\eqref{eq:weighted-submodularity}.
\end{proof}

We next restrict the terminal copies to their prescribed sides of the
cut.

\begin{theorem}
Define $f:2^U\rightarrow\mathbb{R}_{\geq0}$ by
\begin{equation}
f(X)=\mu_H(C_s\cup X).
\label{eq:restricted-function}
\end{equation}
Then $f$ is submodular; namely, for all $X,Y\subseteq U$,
\begin{equation}
f(X)+f(Y)
\geq
f(X\cap Y)+f(X\cup Y).
\label{eq:restricted-submodularity}
\end{equation}
\end{theorem}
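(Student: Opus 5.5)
The plan is to derive the restricted inequality directly from the submodularity of $\mu_H$ on all of $2^W$ (the preceding theorem), using the standard fact that fixing some coordinates of a submodular function preserves submodularity. Given $X,Y\subseteq U$, I would set
\[
A=C_s\cup X,\qquad B=C_s\cup Y,
\]
which are subsets of $W$. Then I would compute their intersection and union. Since $U$, $C_s$ and $C_t$ are pairwise disjoint and $X,Y\subseteq U$, we get
\[
A\cap B=C_s\cup(X\cap Y),\qquad A\cup B=C_s\cup(X\cup Y).
\]
Hence $\mu_H(A)=f(X)$, $\mu_H(B)=f(Y)$, $\mu_H(A\cap B)=f(X\cap Y)$ and $\mu_H(A\cup B)=f(X\cup Y)$.

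Substituting these four identities into~\eqref{eq:weighted-submodularity} gives exactly~\eqref{eq:restricted-submodularity}. I would also note that $f$ takes values in $\mathbb{R}_{\geq0}$, because the empty matching is always feasible and all weights are nonnegative.

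Implicitly, the condition that $C_t$ lies on the sink side is automatic: $C_t$ is excluded from every set of the form $C_s\cup X$, and so it is excluded from their intersections and unions as well. The only point requiring care is the disjointness of $C_s$ and $U$, which makes the intersection formula valid. Because the copies $s_e$ are new vertices, this holds by construction. I expect no real obstacle here, since all the substantive work sits in the edge-preserving reconstruction lemma and the preceding theorem.
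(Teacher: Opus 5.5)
Your proposal is correct and follows essentially the same argument as the paper: set $A=C_s\cup X$ and $B=C_s\cup Y$, rewrite $A\cap B$ and $A\cup B$ with the distributive laws, and substitute into the submodularity inequality for $\mu_H$. One small remark: the identities $A\cap B=C_s\cup(X\cap Y)$ and $A\cup B=C_s\cup(X\cup Y)$ hold for arbitrary sets by distributivity, so they do not depend on the disjointness of $C_s$ and $U$ that you flag as the delicate point.
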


\begin{proof}
Take arbitrary $X,Y\subseteq U$ and let
\[
A=C_s\cup X,
\qquad
B=C_s\cup Y.
\]
By the distributive laws,
\[
A\cap B=C_s\cup(X\cap Y)
\]
and
\[
A\cup B=C_s\cup(X\cup Y).
\]
The submodularity of $\mu_H$ gives
\begin{align*}
f(X)+f(Y)
&=\mu_H(C_s\cup X)+\mu_H(C_s\cup Y)\\
&=\mu_H(A)+\mu_H(B)\\
&\geq\mu_H(A\cap B)+\mu_H(A\cup B)\\
&=\mu_H(C_s\cup(X\cap Y))
  +\mu_H(C_s\cup(X\cup Y))\\
&=f(X\cap Y)+f(X\cup Y).
\end{align*}
Thus, $f$ is submodular.
\end{proof}

Combining~\eqref{eq:mspec-terminal-relaxed},
\eqref{eq:extended-equivalence}, and~\eqref{eq:restricted-function},
we obtain
\begin{equation}
\OPT_{\MSPEC}
=
\min_{X\subseteq U}\mu(X)
=
\min_{X\subseteq U}\mu_H\bigl(\widehat{S}(X)\bigr)
=
\min_{X\subseteq U}f(X).
\label{eq:mspec-sfm}
\end{equation}

\section{Exact Polynomial-Time Algorithm}
\label{sec:algorithm}

Equation~\eqref{eq:mspec-sfm} reduces MSPEC to submodular function
minimization. The resulting exact algorithm is given below.

\begin{algorithm}[t]
\caption{Exact algorithm for general MSPEC}
\label{alg:exact-mspec}
\begin{algorithmic}[1]
\Require An undirected edge-weighted graph
$G=(U\cup\{s,t\},E)$.
\Ensure An optimal $s$-$t$ cut shore $S^*$ and a corresponding
optimal power assignment $p^*$.
\State Construct the extended graph $H=(W,F)$ by creating a dedicated
terminal copy for every terminal-incident edge.
\State Define the value-oracle function
$f(X)=\mu_H(C_s\cup X)$ for $X\subseteq U$.
\State For each query $f(X)$, solve a maximum-weight matching problem
on the bipartite cut graph induced by $\delta_H(C_s\cup X)$.
\State Use an exact submodular function minimization algorithm to
compute
\[
X^*\in\operatorname*{arg\,min}_{X\subseteq U}f(X).
\]
\State Set $S^*=\{s\}\cup X^*$.
\State Solve the fixed-cut linear program $P_{X^*}$ to obtain an
optimal power assignment $p^*$.
\State \Return $S^*$, $p^*$, and $\sum_{v\in U}p_v^*$.
\end{algorithmic}
\end{algorithm}

\subsection{Value Oracle for \texorpdfstring{$f(X)$}{f(X)}}

For an arbitrary $X\subseteq U$, let
\[
Z=C_s\cup X.
\]
The cut-edge set $\delta_H(Z)$ induces a bipartite graph with
bipartition $(Z,W\setminus Z)$, and $f(X)$ is exactly the
maximum-weight matching value in this graph. It can therefore be
computed exactly in polynomial time. In particular, the matching
problem can be transformed into a fixed-flow maximum-cost flow
problem~\cite{ahuja1993network}.

Create a source $\alpha$ and a sink $\beta$, and construct the flow
network as follows:
\begin{enumerate}
  \item For every $u\in Z$, add an arc $\alpha\to u$ of capacity $1$
  and cost $0$.

  \item For every $v\in W\setminus Z$, add an arc $v\to\beta$ of
  capacity $1$ and cost $0$.

  \item For every cut edge $uv\in\delta_H(Z)$ with $u\in Z$ and
  $v\in W\setminus Z$, add an arc $u\to v$ of capacity $1$ and cost
  $w_{uv}^H$.

  \item Add a bypass arc $\alpha\to\beta$ of capacity $K$ and cost
  $0$, where
  \[
  K=\min\{|Z|,|W\setminus Z|\}.
  \]
\end{enumerate}

Send exactly $K$ units of flow from $\alpha$ to $\beta$ while
maximizing the total cost. One unit of flow along a path
\[
\alpha\to u\to v\to\beta
\]
corresponds to selecting the matching edge $uv$. Flow that does not
select a matching edge is sent through the zero-cost bypass arc. Since
all vertex arcs have unit capacity, the selected cut edges form a
matching. Conversely, every matching can be represented by sending
one unit of flow through the corresponding three-arc path for each
selected edge and routing the remaining flow through the bypass arc.
The objective values are preserved. Hence the optimal maximum-cost
flow value is exactly
\begin{equation}
f(X)=\mu_H(Z).
\label{eq:flow-oracle}
\end{equation}

\begin{theorem}
General MSPEC with nonnegative edge weights can be solved exactly in
polynomial time.
\end{theorem}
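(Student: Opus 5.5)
The plan is to assemble the reductions proved above into the pipeline of Algorithm~\ref{alg:exact-mspec}, and then to check correctness and polynomial running time step by step. For correctness I start from \eqref{eq:mspec-sfm}, which states $\OPT_{\MSPEC}=\min_{X\subseteq U}f(X)$ with $f(X)=\mu_H(C_s\cup X)$. The preceding theorem shows that $f$ is submodular on the ground set $U$, so any exact submodular function minimization routine returns a minimizer $X^*$ with $f(X^*)=\OPT_{\MSPEC}$. By the extended-graph equivalence theorem, $f(X^*)=\mu(X^*)$, and by \eqref{eq:fixed-cut-value}, $\mu(X^*)=\OPT(P_{X^*})$. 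Hence an optimal solution $p^*$ of the linear program $P_{X^*}$ has total power $\OPT_{\MSPEC}$. Every edge of $C(X^*)$ satisfies $p^*_u+p^*_v\ge w_{uv}$, so all cut edges are deleted and $s$ is separated from $t$. Thus $(S^*,p^*)$ is a feasible and optimal MSPEC solution.

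For the running time I bound each ingredient. The graph $H$ has $|W|=|U|+|E_s|+|E_t|\le |V|+2|E|$ vertices and $|F|=|E|$ edges, so it is built in linear time. Each oracle call $f(X)$ is the fixed-flow maximum-cost flow of \eqref{eq:flow-oracle} on a network with $O(|W|)$ nodes and $O(|F|+|W|)$ arcs. I will argue correctness of this flow model briefly. The arc capacities are integral, so an integral optimal flow exists. Its non-bypass paths have the form $\alpha\to u\to v\to\beta$ and form a matching in $\delta_H(Z)$. The bypass arc absorbs the remaining flow, so the required $K$ units are always feasible. Standard successive-shortest-path or minimum-cost-flow algorithms solve this in strongly polynomial time in $|W|$ and $|F|$ \cite{ahuja1993network}. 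Equivalently, any bipartite maximum-weight matching algorithm can be used.

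Next I would invoke a strongly polynomial exact SFM algorithm \cite{jiang2024convex}. It uses polynomially many oracle calls in $n=|U|$ plus polynomial arithmetic overhead, and its correctness does not depend on the magnitude of the oracle values. The main point to verify here is that the values $f(X)$ are sums of input weights, hence numbers of bit-size polynomial in the input. Combinatorial SFM algorithms perform only additions, subtractions and comparisons of oracle values, so every intermediate number stays polynomially bounded in bit length. The last step solves the linear program $P_{X^*}$ in polynomial time, for example by the ellipsoid or an interior-point method \cite{schrijver2003combinatorial}. Alternatively, I would read $p^*$ off optimal node potentials of the flow computation for $Z=C_s\cup X^*$. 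This requires reconciling the potentials with $p_s=p_t=0$, which the fixed-cut LP formulation avoids.

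The step I expect to be the main obstacle is the one this theorem inherits rather than proves: the weighted submodularity of $\mu_H$. That result rests on the edge-preserving reconstruction lemma, which must genuinely return $N_A$ and $N_B$ whose multiset union equals $M_\cap\uplus M_\cup$, not merely matchings with the right cardinalities. Inside the present proof, the delicate part is the complexity accounting just described: the SFM call must be exact and polynomial given a value oracle with real (rational) outputs. I would handle this by citing a strongly polynomial SFM algorithm explicitly, so that the total running time is a polynomial in $|U|$ times the cost of one flow computation, plus one LP solve.
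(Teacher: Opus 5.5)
Your proposal is correct and follows the same route as the paper's proof. It uses the submodularity of $f$, the flow-based value oracle, an exact SFM call, and recovery of $p^*$ from $P_{X^*}$ through \eqref{eq:mspec-sfm}, \eqref{eq:extended-equivalence} and \eqref{eq:fixed-cut-value}; your explicit check that $p^*$ deletes every edge of $C(X^*)$ and your bit-size discussion go slightly beyond what the paper writes. One small correction: the ``only additions, subtractions and comparisons'' property belongs to fully combinatorial SFM algorithms such as Iwata's, not to the cutting-plane and lattice-based method of Jiang et al.\ that you cite, so either cite such an algorithm for that remark or rely on the bounded bit complexity that strong polynomiality already guarantees.
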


\begin{proof}
The restricted matching-cut function $f$ is submodular by
Section~\ref{sec:submodularity}. For every $X\subseteq U$, its value
can be computed exactly in polynomial time by the maximum-cost flow
construction above. An exact submodular function minimization
algorithm therefore returns
\[
X^*\in\operatorname*{arg\,min}_{X\subseteq U}f(X)
\]
in polynomial time. Equation~\eqref{eq:mspec-sfm} implies that the cut
shore
\[
S^*=\{s\}\cup X^*
\]
is optimal for MSPEC. Solving the fixed-cut linear program
$P_{X^*}$ yields a power assignment $p^*$ satisfying
\[
\sum_{v\in U}p_v^*
=
\OPT(P_{X^*})
=
f(X^*)
=
\OPT_{\MSPEC}.
\]
Thus Algorithm~\ref{alg:exact-mspec} returns an optimal cut and an
optimal power assignment in polynomial time.
\end{proof}

\section{Complexity Analysis}
\label{sec:complexity}

Let
\[
n=|U|,
\qquad
m=|E|,
\]
and let
\[
N=|W|,
\qquad
M=|F|
\]
denote the numbers of vertices and edges, respectively, in the
terminal-copy extended graph $H=(W,F)$. Each terminal is incident to
at most $n$ edges, and therefore
\[
N=O(n),
\qquad
M=O(m).
\]

The main computational cost arises from exact submodular function
minimization, which repeatedly invokes the value oracle for $f(X)$.
For any $X\subseteq U$, a fixed-flow maximum-cost flow computation
evaluates $f(X)$. Using the successive shortest augmenting path
algorithm with potentials, at most $O(n)$ augmentations are performed,
and each augmentation takes
\[
O(m+n\log n)
\]
time. Thus one value query takes
\begin{equation}
T_f(n,m)=O\bigl(n(m+n\log n)\bigr).
\label{eq:oracle-complexity}
\end{equation}

For the outer minimization, we use the exact strongly polynomial
submodular function minimization algorithm of Jiang et
al.~\cite{jiang2024convex}. On a ground set of size $n$, their
algorithm performs
\[
O(n^3\log n)
\]
value-oracle calls and requires
\[
O(n^4\log n)
\]
additional arithmetic operations. Consequently, the total running
time is
\begin{align*}
T_{\mathrm{total}}
&=O(n^4\log n)
  +O(n^3\log n)\cdot O\bigl(n(m+n\log n)\bigr)\\
&=O\bigl(n^4m\log n+n^5\log^2n\bigr).
\end{align*}

The construction of the extended graph and the final solution of the
fixed-cut linear program are polynomial-time operations and do not
change the above dominant bound.

\section{Conclusion}
\label{sec:conclusion}

We studied general edge-weighted MSPEC. Using the primal--dual
relationship for a fixed cut, we transformed the problem into the
minimization of a terminal-relaxed maximum-weight matching-cut
function. We then used a terminal-copy extended graph to prove that
this function is submodular, which yielded an exact polynomial-time
algorithm. Future work may investigate extensions with upper bounds
on vertex powers, multiple terminals, or geometric graph constraints,
as well as more efficient implementations of submodular function
minimization and network-flow routines to improve practical
performance.

\end{document}